\documentclass[11pt,a4paper,final]{article}
\usepackage{mathtools}
\usepackage[linesnumbered,ruled,vlined]{algorithm2e}
\usepackage{amsmath,amsthm,amssymb}
\usepackage[T1]{fontenc} 
\usepackage{float}
\usepackage{cite}
\usepackage{fullpage}
\newlength{\bibitemsep}\setlength{\bibitemsep}{.2\baselineskip plus .05\baselineskip minus .05\baselineskip}
\newlength{\bibparskip}\setlength{\bibparskip}{0pt}
\let\oldthebibliography\thebibliography
\renewcommand\thebibliography[1]{%
  \oldthebibliography{#1}%
  \setlength{\parskip}{\bibitemsep}%
  \setlength{\itemsep}{\bibparskip}%
}

\DeclarePairedDelimiter{\floor}{\lfloor}{\rfloor}
\newtheorem{defn}{Definition}
\newtheorem{theorem}{Theorem}

\title{A McEliece cryptosystem using permutation codes}
\author{Adarsh Srinivasan \and Ayan Mahalanobis}
\newcommand{\Addresses}{{
  \bigskip
  \footnotesize
  
  A.~Srinivasan, \textsc{Department of Computer Science, Rutgers University, USA}\par\nopagebreak
  \textit{E-mail address}: \texttt{adarshsrinivasan256@gmail.com}
\medskip

A.~Mahalanobis (Corresponding author), \textsc{IISER Pune, Pune, INDIA}\par\nopagebreak
  \textit{E-mail address}: \texttt{ayan.mahalanobis@gmail.com}
}}
\bibliographystyle{abbrv}
\begin{document}
\date{}
\maketitle
\begin{abstract}
This paper is an attempt to build a new public-key cryptosystem; similar to the McEliece cryptosystem, using permutation error-correcting codes. We study a public-key cryptosystem built using two permutation error-correcting codes. We show that these cryptosystems are insecure. However, the general framework in these cryptosystems can use any permutation error-correcting code and is interesting. We present an enhanced McEliece cryptosystem which subsumes McEliece cryptosystem based on linear error correcting codes.   
\end{abstract}
\section{Introduction} 
McEliece and Niederreiter cryptosystems are very popular these days. One of the reason behind this popularity is that there are some instances of these cryptosystems that can resist \emph{quantum Fourier sampling attacks}. These makes them secure, even when quantum computers are built. These kind of cryptosystems are called \emph{quantum-secure cryptosystems}.

McEliece and Niederreiter cryptosystems use linear error-correcting codes. Similar to linear error-correcting codes, we can define \textbf{permutation error-correcting codes}. These codes use permutation groups the same way linear codes use vector-spaces. This paper is an attempt to build secure public-key cryptosystems using permutation codes in the same spirit as McEliece and Niederreiter cryptosystems were built using linear error-correcting codes. The cryptosystems that came out of permutation codes is not secure. However, the journey we took to build these cryptosystems is interesting in its own right. Moreover, we were able to \textbf{enhance the original McEliece cryptosystem} by embedding it into a permutation group.

In this paper we ask two questions:
\begin{description}
\item[Q1] Can one build a secure public-key cryptosystem using permutation codes, similar to cryptosystems built using linear error-correcting codes?
\item[Q2] Are there any advantages of using permutation codes compared to linear codes in public-key cryptosystems?
\end{description}
We try to keep this paper as self contained as possible. In the next section, we present a brief overview of permutation codes. Most of it is a review of Bailey's work~\cite{bailey1,bailey2,bailey_thesis} on permutation codes. We have performed some extra computational analysis on his alternative decoding in Section~\ref{analysis}. The section after that presents a public-key cryptosystem using permutation codes whose security depends on the hardness of decoding generic permutation codes. We study this cryptosystem using two permutation codes -- the symmetric group acting on 2-subsets and a class of wreath product groups. We show that both these cryptosystems are insecure. The first one is insecure due to an information set decoding attack and the latter due to an inherent structure in the wreath product. We then present the enhanced McEliece cryptosystem in Section~\ref{enhanced}.
\section{Permutation error-correcting codes}
The use of sets of permutations in coding theory (also referred to as permutation arrays) was studied since the 1970's. Blake et al.~\cite{blake,blake1979coding} were the first to discuss using permutations this way. They had certain applications in mind. One such example was powerline communications. However, permutation codes have never got the level of attention that linear codes did. Bailey~\cite{bailey_thesis}, in his thesis, describes a variety of permutation codes and presents a decoding algorithm which works for arbitrary permutation groups using a combinatorial structure called \emph{uncovering-by-bases}. He was the first to exploit the algebraic structure of groups to come up with decoding algorithms for several families of groups. 

In this section, we explore the use of permutation groups as error-correcting codes. We begin by defining a Hamming metric on permutations, similar to the one defined for vector spaces over finite fields. Let us define a set $\Omega=\{1,2,\ldots,n\}$. For a permutation $g \in S_n$, we define its support $Supp(g)$ and set of fixed points $Fix(g)$ to be those points it moves and those points it fixes respectively. A permutation, which is a bijective function from $\Omega$ to $\Omega$ can be represented in two ways. The first is by listing down all its images. This is called the list form of a permutation. For example, $[2,3,1]$ is a permutation $g$ acting on $\{1,2,3\}$ such that $1 \cdot g=2,2 \cdot g=3,3 \cdot g=1$. The other way, as a product of disjoint cycles. The list form of a permutation is useful in defining a Hamming metric. 
\begin{defn}
For elements $g,h \in S_n$, the Hamming distance between them, $d(g,h)$ is defined to be the size of the set $\{x \in \Omega | x \cdot g \neq x \cdot h\}$.   
\end{defn}
In other words, the Hamming distance between two permutations is the the number of positions on which they differ  when written down in list form. The Hamming distance between two permutations $g$ and $h$ can be proved to be equal to $|Supp(gh^{-1})|=n-|Fix(gh^{-1})|$. 
A \emph{permutation code} is a collection of permutations in $S_n$ endowed with a Hamming metric. For a subset $C$ of $S_n$, the minimum distance of the subset is defined to be $\min_{g,h \in C}d(g,h)$. In this paper, we assume that the collection of permutations form a subgroup $G$ of $S_n$. In that case, the minimum distance is:
\begin{equation}
    m(G)=\min_{g \in G, g \neq 1}|Supp(g)|=n-\max_{g \in G, g \neq 1}|Fix(g)|.
\end{equation}
This is also known as the \emph{minimal degree} of the permutation group. The quantity $r=\floor*{\frac{m(G)-1}{2}}$ is called the \emph{correction capacity} of the code. 

We can now define the distance between a permutation $\sigma$ and a group $G$ to be the distance between $\sigma$ and the permutation in $G$ closest to it.
\begin{equation*}
    d(G,\sigma)=\min_{g \in G}d(g,\sigma)
\end{equation*}
The subgroup distance problem can be defined as follows: Given a set of generators of a permutation group $G \leq S_n$, a permutation $\sigma \in S_n$ and an integer $k > 0$, is $d(\sigma,G) \leq k$? We can also generalize this problem to consider inputs $\sigma$ to be lists of length $n$ with symbols from $\Omega$ which do not necessarily form a permutation.

A nearest neighbor of $\sigma$ in $G$ is defined to be a permutation $g$ in $G$ such that $d(g,\sigma)=d(G,\sigma)$. This permutation exists because $d(g,\sigma)=\min_{g \in G}d(g,\sigma)$. If the distance between $\sigma$ and $G$ is greater than the correction capacity of the group, this nearest neighbor need not be unique. However, if $d(\sigma,G) \leq r$, the nearest neighbor of $\sigma$ in $G$ is unique. 

Hence, $G$ can be used as an error-correcting code in the following manner. The information we want to transmit is encoded as a permutation in $G$ and transmitted as a list through a noisy channel. We assume that when the message was received, $r$ or fewer errors were introduced. Such a channel is called a \emph{Hamming channel}. The received message is a list $w$ of length $n$ with symbols from $\Omega$ (which need not necessarily form a permutation). Because the correction capacity of $G$ is $r$, the nearest neighbor of $w$ in $G$ is uniquely defined to be $g$. The decoder can now solve the subgroup distance problem with the word $w$ and the group $G$ as input to recover the original message $g$. However, while we have proved that decoding is possible in theory, because the \emph{subgroup distance problem is NP-hard}, generic algorithms cannot be used to decode permutation codes efficiently. Rather, we need specific algorithms tailor-made for different families of permutation codes.
\subsection{A decoding algorithm using uncovering-by-bases}
In this section, we review Bailey's work on decoding algorithms for permutation codes. Consider a permutation group $G \leq S_n$ with correction capacity $r$. Bailey's decoding algorithm uses the following combinatorial structure. An uncovering-by-bases (UBB) is a collection of bases for $G$. To recall, a base is a set of points $B = \{b_1,b_2,\ldots,b_m\}\subseteq\Omega$ with the property that the only element of $G$ which stabilises all the points is the identity. Given a base, we can define the following subgroup chain:
\begin{equation}
    G=G^{[0]} \geq G^{[1]} \dots \geq G^{[m]}=\{1\}
\end{equation}
where $G^{[i]}$ is the pointwise stabilizer of $\{b_1,b_2,\dots,b_i\}$ in $G$. We say that a base $B$ is \emph{irredundant} if $G^{[i]} \neq G^{[i+1]}$ for all $i$. A generating set $S$ of $G$ is called a strong generating set if $\left<S \cap G^{[i]} \right>=G^{[i]}$ for $1 \leq i \leq m$. Here $\left<S \right>$ is the notation for the group generated by the set $S$. Given a group $G$, a base and a strong generating set for it can be constructed in nearly linear time in the size of the input using the Schreier-Sims algorithm. The orbits $b_i \cdot G^{[i-1]}$ are called the \emph{fundamental orbits}, and the (right) \emph{transversals} $R_i$ are the set of (right) coset representatives for $G^{[i]} \mod G^{[i+1]}$, with the requirement that the representative for the coset $G^{[i+1]} \cdot 1=G^{[i+1]}$ is the identity for all $i$. They can be computed by keeping track of the permutations of $G^{[i-1]}$ that take $b_i$ to each point in the orbit. It follows from the definition of a base that every permutation in $G$ is uniquely defined by its action on the elements of a base, and we can reconstruct the permutation.

Corresponding to the base $B$, we construct the fundamental orbits and the right transversals using the Schreier-Sims algorithm. Initially, we set $\sigma $ to be the identity. We then find the permutation $r_1$ in $R_1$ such that $b_1 \cdot r_1=x_1$. If $x_1$ does not lie in the fundamental orbit of $b_1$, that means no such $\sigma$ exists and we can exit the procedure. We then replace $(x_1,\dots,x_m)$ with $(x_1 \cdot r_1^{-1},\dots,x_1 \cdot r_m^{-1})$. In the $i$-th iteration, we check if $x_i \cdot r_1^{-1}r_2^{-1}\dots r_{i-1}^{-1}$ lies in the orbit of $G_{b_i}$. If it does not, we exit the procedure, else we replace $\sigma $ with $\sigma r_i$; where $r_i$ is the permutation in the transversal $R_i$ that takes $b_i$ to $x_i \cdot r_1^{-1}r_2^{-1}\dots r_{i-1}^{-1}$.
For more details, we refer to Seress~\cite[Chapter 4]{seres}.
\begin{defn}
An uncovering-by-bases (UBB) to fix $r$ errors in a permutation group acting on $\Omega$ is a set of bases for the group, such that, for each $r$-subset of $\Omega$ (a $r$-subset of $\Omega$ is a subset of $\Omega$ of size $r$); there is at least one base in the UBB that is disjoint from the $r$ subset. 
\end{defn}
It is known that every permutation group has an uncovering-by-bases~\cite[Proposition 7]{bailey1}. However, they have been constructed only for a few permutation groups. There is no known procedure to construct a small UBB for an arbitrary permutation group. Uncovering-by-bases can be used to decode as follows:

Consider a permutation $g \in G$ in the list form and let the list $w$ was obtained after $r$ or fewer errors were introduced to $g$. Note that $w$ can have repeated entries and need not be a permutation. Let $R$ be a subset of $\Omega$ of size less than or equal to $r$. Errors were introduced in positions from $R$.  The set $R$ is not known to the decoder a priory. Let $\mathcal{U}$ be a UBB for $G$. It is a consequence of the definition of a UBB that there exists a base $B$ in $\mathcal{U}$ which is completely disjoint from $R$. Hence, in the positions indexed by $B$, the lists $w$ and $g$ are identical. As $B$ is a base, we can reconstruct the entire permutation $B$ from just its action on the points in $B$. While the decoder does not know which base in $\mathcal{U}$ is the appropriate base $B$, it can just repeat this procedure for every base in $\mathcal{U}$. 

This algorithm is similar to the method of \emph{permutation decoding} for linear codes proposed by MacWilliams and Sloane~\cite{macwilliams1964permutation}. This involves using a subset of the automorphism group of the code which moves any errors out of the information positions. 
\subsection{Some constructions of uncoverings-by-bases} \label{examples}
There are only a few examples of UBB that were constructed by Bailey. One of them is a symmetric group acting on $2$-subsets and the other is a wreath product of a regular permutation group with a symmetric group. We talk about them next.
\subsubsection{Symmetric group acting on $2$-sets}
\label{2sets}
Let $\Omega$ be the set of all $2$-subsets of $\{1,2,\dots,m\}$. Consider the symmetric group $S_m$ acting on $\Omega$ with the setwise action: $g \cdot \{i,j\}=\{g \cdot i, g \cdot j\}$. Hence, $G$ is a subgroup of $S_n$, where $n=m(m-1)/2$ in this action. The minimal degree of $G$ is $2(m-2)$ and its correction capacity $r$ is $m-3$~\cite[Proposition 21]{bailey1}.

The set of all $2$-subsets of $\{1,2,\dots,m\}$ can also be viewed as the edge set of the complete graph $K_m$. Thus we can use graph theoretical results to construct bases for $G$, which we will use to construct a UBB. 

A spanning subgraph of $K_m$ which has at most one isolated vertex and no isolated edges forms a base for $G$. A Hamiltonian circuit of $K_m$ is a circuit in $K_m$ containing each vertex exactly once. From a Hamiltonian cycle of $K_m$, we can obtain such a spanning graph (called a V-graph) by deleting every third edge~\cite[Fig.~2]{bailey1}. A Hamiltonian decomposition of a graph is a partition of the edge set of the graph into disjoint Hamiltonian circuits and at most one perfect matching. In the 1890's, Walecki~\cite{walecki} showed that $K_m$ can be decomposed into $(m-1)/2$ Hamiltonian cycles when $m$ is odd. If $m$ is even, it can be decomposed into $(m-2)/2$ Hamiltonian cycles and one perfect matching. Using Walecki's decomposition, we can construct a UBB for $G$. The UBB is the set of all V-graphs which can be obtained from the Hamiltonian decomposition of $K_m$ and it can be proved that any $r$-subset of edges of $K_m$ avoids at least one of these V-graphs. For more on this topic the reader is referred to Bailey~\cite{bailey1}.
This UBB is of size $O(m)$, and each base is of size $O(m)$. Hence, we can use this UBB in the decoding algorithm which would have time complexity $O(m^4)=O(n^2)$.

\subsubsection{Wreath product of regular groups}
Let $H$ be a group acting on a set $\Delta$. It is called a regular group if, for every $x,y \in \Delta$, there exists exactly one $h \in H$ such that $x \cdot h=y$. The permutation code $G$ is the wreath product of $H$, a regular permutation group acting on $m$ points with the symmetric group $S_n$ acting imprimitively on $mn$ points. 

The group $G$ consists of all tuples of the form $(h_1, h_2, \dots, h_n, \sigma)$, where $h_i \in H$ and $\sigma \in S_n$. We use the symbol $1$ to denote the identity element of $H$ and $e$ for the identity element of $S_n$. In this section, we redefine $\Omega=[m] \times [n]$ consisting of $n$ copies of the set $[m]$ as $n$ columns each with $m$ rows. 
The action of $g=(h_1, h_2, \dots, h_n, \sigma)$ on $(i,j)$ is $(i \cdot h_{i \cdot \sigma},j \cdot \sigma)$. The group $S_n$ acts on the columns and the group $H^n$ acts on the rows. As a regular group acting on $m$ points has size $m$, $|G|=m^n n!$. 
\begin{theorem}
The minimal degree of $G$ is $m$ and its error correction capacity is $\floor*{\frac{m-1}{2}}$. 
\end{theorem}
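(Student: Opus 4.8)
The plan is to compute the minimal degree $m(G)=\min_{g\in G,\,g\neq 1}|Supp(g)|$ directly from the imprimitive action on $\Omega=[m]\times[n]$, by a case split on whether the top component $\sigma\in S_n$ of an element is trivial, and then to read off the correction capacity from its definition $r=\floor*{\frac{m(G)-1}{2}}$. The one structural fact I would use repeatedly is that a regular action is semiregular: a nonidentity element of $H$ has no fixed point on $[m]$, hence moves all $m$ of its points. (I would also note the harmless standing assumption $m\geq 2$, so that $H$ is nontrivial.)

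First I would fix a nonidentity $g=(h_1,\dots,h_n,\sigma)\in G$ and observe that the column coordinate of $(i,j)\cdot g$ is $j\cdot\sigma$, while within a column the coordinate is moved by one of the $h_k$'s (the precise index is irrelevant for the argument). \emph{Case $\sigma=e$.} Then $g$ fixes every column setwise and acts on column $j$ by a fixed element $h_j\in H$; since $g\neq 1$, at least one $h_j\neq 1$. By semiregularity such an $h_j$ moves all $m$ points of column $j$, while points in columns with $h_j=1$ are fixed, so $|Supp(g)|=m\cdot|\{j:h_j\neq 1\}|\geq m$. Equality is attained by taking a single nonidentity $h_1\in H$ and all other components trivial, so the minimum of $|Supp(g)|$ over this case is exactly $m$. \emph{Case $\sigma\neq e$.} Then $\sigma$ moves at least two columns; if $\sigma$ sends column $j$ to $j'\neq j$, then every point $(i,j)$ has image with column coordinate $j'\neq j$ and is therefore moved, so all $m$ points of each $\sigma$-moved column lie in $Supp(g)$, giving $|Supp(g)|\geq 2m>m$. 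Combining the two cases, $m(G)=m$, and hence $r=\floor*{\frac{m(G)-1}{2}}=\floor*{\frac{m-1}{2}}$.

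I do not expect a genuine obstacle here: the argument is short and elementary. The only points requiring care are getting the action convention right (which slot of the tuple $(h_1,\dots,h_n)$ governs a given column, and in which direction $\sigma$ permutes columns) and invoking ``regular $\Rightarrow$ fixed-point-free'' rather than the weaker transitivity; both are handled by the case analysis above, and the extremal element in the $\sigma=e$ case exhibits that the bound $m$ is tight.
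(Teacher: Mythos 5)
Your proposal is correct and follows essentially the same route as the paper: exhibit the element $(h,1,\dots,1,e)$ to show the minimal degree is at most $m$, then split on whether $\sigma$ is the identity, using that a moved column contributes all $m$ of its points in the nontrivial-$\sigma$ case and that regularity forces a nonidentity $h_j$ to move its whole column otherwise. Your explicit appeal to semiregularity and the $2m$ lower bound in the second case are just slightly more detailed versions of the paper's argument.
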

\begin{proof}
Consider the group element $(h,1,\dots,1,e)$ for some $h \in H$. This permutation moves all the points in the first column and fixes all the other points. Hence the minimal degree of $H$ is less than or equal to $m$. We now show that any non identity permutation in $G$ moves at least $m$ points. Consider a permutation $(h_1, h_2, \dots, h_n, \sigma)$. If $\sigma$ is not the identity, it must move all the points in one of the columns to another column and hence moves at least $m$ points. If $\sigma$ is the identity permutation, at least one of the $h_i$ must be a non identity element of $H$ which will move all $m$ points in that column. 
\end{proof}
\begin{theorem}
Any subset of $\Omega$ forms an irredundant base for $G$ if and only if it has exactly one point from each column of $\Omega$. 
\end{theorem}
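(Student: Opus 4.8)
The whole argument would hinge on one observation about regular groups, which I would isolate first as a lemma: \emph{if $g=(h_1,\dots,h_n;\sigma)\in G$ fixes a point $(i,c)\in\Omega$, then $g$ fixes every point of column $c$.} To see this, note that fixing $(i,c)$ forces $c\cdot\sigma=c$, so the $S_n$-part of $g$ stabilises column $c$ setwise, and it forces the copy of $H$ acting on column $c$ to fix the row $i$; since $H$ acts regularly, the stabiliser of a point is trivial, so that copy of $H$ is the identity. Hence $g$ acts as the identity on all of column $c$. Everything after this is bookkeeping on top of the lemma, so I will keep it brief.

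For the ``if'' direction, suppose $B=\{b_1,\dots,b_n\}$ has exactly one point in each column, say (after relabelling columns) $b_j$ in column $j$. If $g\in G$ fixes every $b_j$ then, by the lemma, $g$ is the identity on every column, so $g=1$; thus $B$ is a base. For irredundancy I would exhibit, for each $k$, an explicit element of $G^{[k-1]}\setminus G^{[k]}$: take $g_k$ with $S_n$-part equal to $e$, with trivial $H$-component on every column except column $k$, and with a non-identity element of $H$ on column $k$ (such an element exists precisely because $H$ is non-trivial, i.e. $m\ge 2$). A non-identity element of a regular group has no fixed points, so $g_k$ moves $b_k$ while fixing every point outside column $k$, in particular $b_1,\dots,b_{k-1}$. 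Hence $G^{[k-1]}\ne G^{[k]}$ for every $k$, and $B$ is irredundant.

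For the ``only if'' direction, assume $B$ is an irredundant base. If some column $c$ missed $B$ entirely, then the element with $S_n$-part $e$ and a single non-identity $H$-component on column $c$ would fix all of $B$ but be nontrivial, contradicting that $B$ is a base; so $B$ meets every column. If some column $c$ contained two points of $B$, let $b_t$ be the later of them in the ordering and $b_s$ (with $s<t$) the earlier. Every $g\in G^{[t-1]}$ fixes $b_s$, hence by the lemma fixes all of column $c$, hence fixes $b_t$; so $G^{[t-1]}=G^{[t]}$, contradicting irredundancy. Therefore each column contains exactly one point of $B$.

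I do not expect a serious obstacle: the one substantive step is the regularity lemma, and the main thing to be careful about is writing the imprimitive wreath-product action precisely enough that the lemma is transparent. Two minor points I would flag in the write-up: the characterisation does not depend on the ordering chosen on $B$ (the condition ``one point per column'' is symmetric, and no argument above used anything about the order beyond the trivial relabelling), and the ``if'' direction really does need $m\ge 2$ — when $m=1$ one has $G=S_n$ acting on $n$ points, for which an $n$-point base is always redundant.
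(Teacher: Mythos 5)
Your proof is correct and takes essentially the same approach as the paper: regularity forces any element that fixes a point of a column (while stabilising that column) to act trivially on the whole column, missing columns are detected by single-column elements, and one point per column pins down $\sigma$ and then each $H$-component. If anything, you are more careful than the paper about the irredundancy bookkeeping --- the paper dismisses sets with two points in a column by noting they properly contain a base, a principle that is not valid in general and really needs your column lemma --- and your $m\ge 2$ caveat is a fair, if minor, observation.
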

\begin{proof}
Consider any set $S$ which does not contain any points from the first column. The permutation $(h,1,\dots,1,e)$ fixes every point in $S$. Hence, $S$ cannot be a base for $G$.  
To prove the converse, consider the following set $\{(x_1,1),(x_2,2),\dots,(x_n,n)\}$ with each $x_i$ belonging to the $i^\mathrm{th}$ column. Suppose here exists $g=(h_1, h_2, \dots, h_n, \sigma)$ that fixes each of these points. As $(x_i,i) \cdot g$ must lie in the $i^\mathrm{th}$ column for each $i$, this implies that $\sigma$ must be the identity permutation. Hence, the action of $g$ on $x_i$ is $x_i \cdot h_i$. As the action of $H$ on $[m]$, is regular, this implies that each $h_i$ is the identity. 

We have proved that $B$ is a base for $G$ if and only if $B$ contains at least one point from each column of $\Omega$. If $B$ has more than one point from some of the columns of $\Omega$, we can obtain a subset of $B$ which contains exactly one point from each column of $\Omega$ and is a base for $G$. This means that $B$ must be redundant. Hence, we have proved that $B$ is a irredundant base for $G$ if and only if it has exactly one point from each column of $\Omega$. 
\end{proof}
Using this theorem, we can now construct a UBB for $G$. Each row for of $\Omega$ forms a base and a collection of $r+1$ rows form a UBB for $G$, where $r=\floor*{\frac{m-1}{2}}$. This is because any subset of size $r$ will intersect with at most $r$ of those bases described in the UBB and hence, disjoint from at least one of those bases. 

Having obtained a UBB, we can use it in the decoding algorithm which runs in $O(m^2n^2)$. 
\subsection{An alternative decoding algorithm for wreath product}
\label{altdecoding}
In this section, we present a simple decoding algorithm introduced by Bailey and Prellberg~\cite{bailey3}, which uses majority logic principles for the wreath product groups and compare its performance to the original UBB algorithm. For simplicity, we assume the group $H$ to be the cyclic group $C_m$, although this algorithm can be extended to any regular group. 

Consider a permutation $g$ in $G$, less than $r$ errors were introduced to form $w$. The permutation $g$ acts on $\Omega$, which as noted before, can be viewed of as $n$ copies of the set $[m]$ as $n$ columns each with $m$ rows. First, we rewrite each member of $\Omega$ and the letters in the received word in the form $(i,j)$, with $i \in [m]$ and $j \in [n]$.   
The columns of $\Omega$ form a complete block structure for $G$. As less than $r=\floor*{\frac{m-1}{2}}$ errors were introduced, the majority of positions in each block will contain the correct symbol. Let $g$ be of the form $(h_1, \dots, h_n, \sigma)$ where each $h_i$ is a cyclic shift. In the $j^\mathrm{th}$ column, we consider the $i^\mathrm{th}$ position which contains the tuple $(p,q)$, and calculate the number $p-i$, which corresponds to the cyclic shift in that block. We set the value of $h_j$ to be the most frequently occurring cyclic shift and we set $j \cdot \sigma$ to be the most frequently occurring value of $q$ in that block. When we repeat this procedure for each block, we can obtain the cyclic shifts corresponding to each $h_j$ and obtain $\sigma$ by computing its action on each $j \in [n]$. We now convert $g$ from the form $(h_1, \dots, h_n, \sigma)$ to that of a permutation acting on $\Omega$. 

We now estimate the complexity of this algorithm. The first part of the algorithm involves splitting the word into blocks and rewriting each symbol. This could be done in constant time for each position. Computing the value of the block label and the cyclic shift for each position involves some integer arithmetic which would take a constant amount of time for each position. Hence this would take $O(mn)$ time. 

The second part of the algorithm involves finding the most frequently occurring value of the block value and cyclic shift in each block. Let's consider the part where we find the most frequently occurring value of the cyclic shift. We maintain an auxiliary list of size $m$, with each position corresponding to the frequency of that cyclic shift. We iterate through the block, compute the cyclic shift for each location of the block and update its frequency in the auxiliary list. This procedure takes $O(m)$ time. We then go through the auxiliary list to find the most frequently occurring cyclic shift. We do a similar procedure for finding the action of $\sigma$ on the block using an auxiliary list of size $n$ and would hence take $O(n)$ time. This procedure is repeated for each block.   

The final part of the algorithm, which involves converting the reconstructed word back to a permutation form can be done with $O(mn)$ arithmetic operations. Hence, the algorithm takes $O(mn)$ time if $m >n$ and $O(n^2)$ time if $n >m$. This is faster than the UBB algorithm for the same group, which would take $O(m^2n^2)$ time.
\subsubsection{Decoding more than $r$ errors} \label{analysis}
In fact, the algorithm described above can correct more than $r$ errors, as long as the majority of elements in each block are correct. Hence there are some error patterns with even up to $nr$ errors which can be decoded using this algorithm. Bailey obtained the following expression for the number of error patterns of a certain length which can be corrected by this algorithm (an error pattern is a certain set of positions on which the errors are induced). 

Let $k$ be a positive integer. Let $P_{n,r}(k)$ be the set of all partitions of $k$ into at most $n$ parts where each part is of size at most $r$ . We write such a partition in the form $\pi = \left\{ (i,f_i) \middle| \sum_{i}i f_i =k \right\}$, where $f_i$ is the number of times the number $i$ appears in the partition. For a partition $\pi$, we define the quantity $c_i$ to be $c_i=\sum_{j=1}^{i-1}f_i$, which is the number of parts in $\pi$ of size strictly less than $i$.
\begin{theorem}
For an integer $k \leq nr$, the following number of patterns of $k$ errors can be corrected:
\begin{equation*}
   E_{n,r}(k)= \sum_{\pi \in P_{n,r}(k)} \prod_{i=1}^r \binom{n-c_i}{f_i} \binom{m}{i}^{f_i}
\end{equation*}
\end{theorem}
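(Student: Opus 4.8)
The plan is to split the statement into a \emph{correctness characterisation} — deciding exactly which sets of positions the algorithm decodes — and a \emph{counting} part, and then to recognise the count as the claimed sum by a routine multinomial rearrangement.

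First I would pin down what ``a pattern of $k$ errors can be corrected'' means: for \emph{every} assignment of erroneous symbols to the chosen $k$ positions, the algorithm must return the transmitted $g$. Since the algorithm treats each column (block) $j$ of $\Omega=[m]\times[n]$ independently — from each of the $m$ positions in that column it reads off the implied cyclic shift and the implied column-image and outputs the most frequent value of each — and since the received word agrees with $g$ in every non-error position, a column carrying $a_j$ errors exhibits the correct shift and the correct image in exactly $m-a_j$ of its $m$ positions. Hence the majority vote returns the correct value regardless of the error symbols precisely when $m-a_j>a_j$, i.e. $a_j\le\floor*{\frac{m-1}{2}}=r$, while for $a_j\ge r+1$ an adversarial choice of symbols defeats it. So the claim reduces to: the number of $k$-subsets $E\subseteq\Omega$ with $|E\cap(\text{column }j)|\le r$ for all $j$ equals $E_{n,r}(k)$.

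Next I would count these subsets in two stages. Conditioning on the occupancy vector $(a_1,\dots,a_n)$ with $a_j:=|E\cap(\text{column }j)|$, which ranges over all tuples satisfying $0\le a_j\le r$ and $a_1+\cdots+a_n=k$ (an empty range when $k>nr$, matching the hypothesis $k\le nr$), and noting that each such vector is realised by $\prod_j\binom{m}{a_j}$ subsets, gives
\[
E_{n,r}(k)\;=\;\sum_{\substack{0\le a_j\le r\\ a_1+\cdots+a_n=k}}\ \prod_{j=1}^{n}\binom{m}{a_j}.
\]
I would then group the tuples by \emph{type}: the multiset of nonzero entries of $(a_1,\dots,a_n)$ is exactly a partition $\pi\in P_{n,r}(k)$, and with $f_i=\#\{j:a_j=i\}$ one has $\sum_i i f_i=k$ and $\sum_{i\ge1}f_i\le n$. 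A given $\pi$ is realised by $\binom{n}{f_1,\dots,f_r,\,n-\sum_i f_i}$ tuples (ways to label the $n$ columns so that value $i$ occurs $f_i$ times and value $0$ occurs $n-\sum_i f_i$ times), each contributing $\prod_{i=1}^{r}\binom{m}{i}^{f_i}$. The final step is the classical telescoping identity
\[
\binom{n}{f_1,\dots,f_r,\,n-\textstyle\sum_i f_i}\;=\;\prod_{i=1}^{r}\binom{n-c_i}{f_i},\qquad c_i=f_1+\cdots+f_{i-1},
\]
obtained by choosing the $f_1$ columns of value $1$ from all $n$, then the $f_2$ columns of value $2$ from the remaining $n-c_2$, and so on; substituting it into the sum yields precisely the stated formula for $E_{n,r}(k)$.

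The two-stage count and the multinomial identity are routine bookkeeping; the step that needs genuine care — and the one I expect to be the main obstacle — is the correctness characterisation, namely getting the quantifier over error symbols right and verifying that the majority threshold is exactly $r=\floor*{\frac{m-1}{2}}$, in particular handling the tie case when $m$ is even so that $a_j=r+1$ really is uncorrectable in the worst case. Once that is settled, the rest is mechanical.
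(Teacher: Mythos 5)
Your proof is correct: the characterisation of correctable patterns (at most $r$ errors per column, with the worst-case quantifier over error values handled for both parities of $m$, including the tie at $m/2$ when $m$ is even) together with the two-stage count and the telescoping multinomial identity $\binom{n}{f_1,\dots,f_r,\,n-\sum_i f_i}=\prod_{i=1}^{r}\binom{n-c_i}{f_i}$ yields exactly $E_{n,r}(k)$. The paper gives no proof of its own — it defers to Bailey's thesis, calling it a simple counting argument over these definitions — and your write-up supplies precisely that argument, so it matches the intended approach.
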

\begin{proof}
For a proof, we refer to Bailey~\cite[Section 6.7]{bailey_thesis}. It is a simple counting argument using definitions described above. 
\end{proof}
We ran simulations to find the proportion of error patterns which can be corrected for the case $m=5$. In this case, the minimal degree is $5$ and has correction capacity $2$. In practice a much larger number of errors can be corrected, with a high probability of success. 

Using a computer program, we calculate the value of $E_{n,r}(k)$ for different values of $k$ and calculate the value of $\frac{E_{n,r}(k)}{mn}$ to obtain the probability that the alternate decoding algorithm succeeds. We plot this for $n=100$ in Figure~\ref{probdecod}.

We can deduce from the data that the algorithm can decode $95$ percent of error patterns of length up to $19$, $90$ percent of error patterns of length up to $24$, $80$ percent of error patterns of length up to $31$ and $50$ percent of error patterns of length up to $46$. 

\begin{figure}[H]
    \centering
    \includegraphics[scale=0.5]{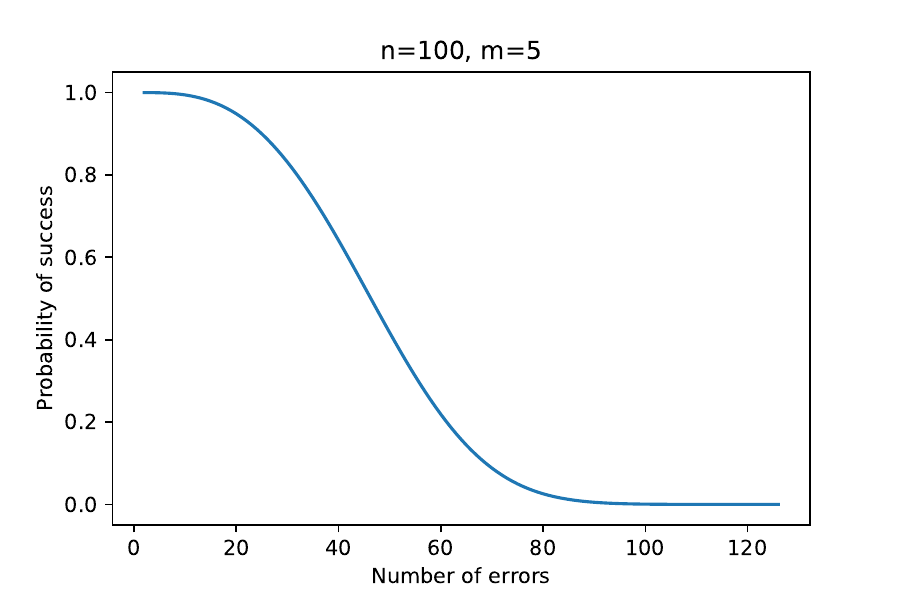}
    \caption{Plot of fraction of errors that can be corrected for $n=100$, $m=5$}
    \label{probdecod}
\end{figure}

Interestingly, with the probability of success fixed, the number of errors which can be corrected increases with $n$. This is despite the minimal degree (correction capacity) remaining the same. We plot the number of errors which can be plotted with $95$ percent probability of success with respect to $n$, both in terms of absolute number of errors which can be corrected and in terms of the \emph{error rate}, which is the ratio of the error induced to the degree of the permutation group. Unlike the correction capacity which increases with $n$, the error rate seems to decrease slightly as $n$ increases. 
\begin{figure}[H]
    \centering
    \includegraphics[scale=0.5]{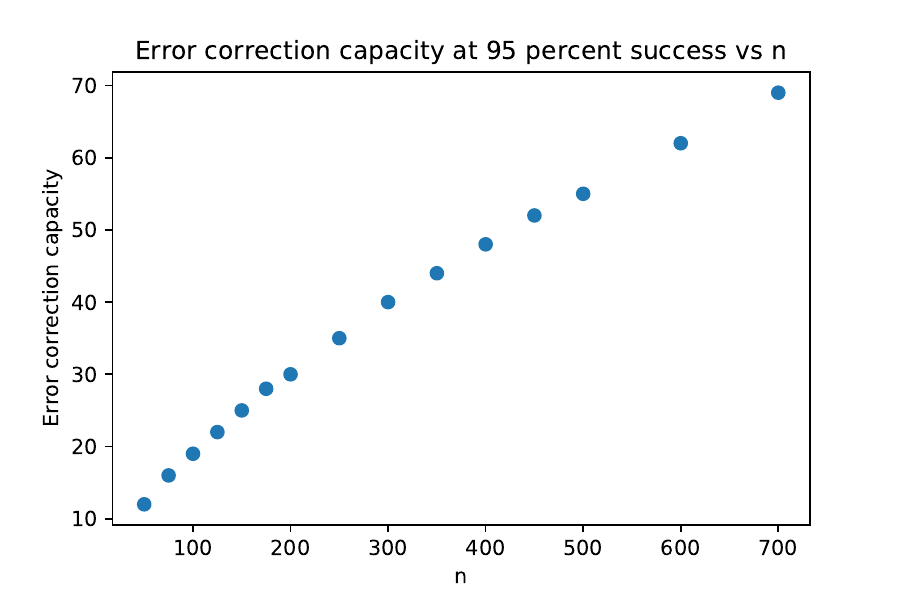}
    \includegraphics[scale=0.5]{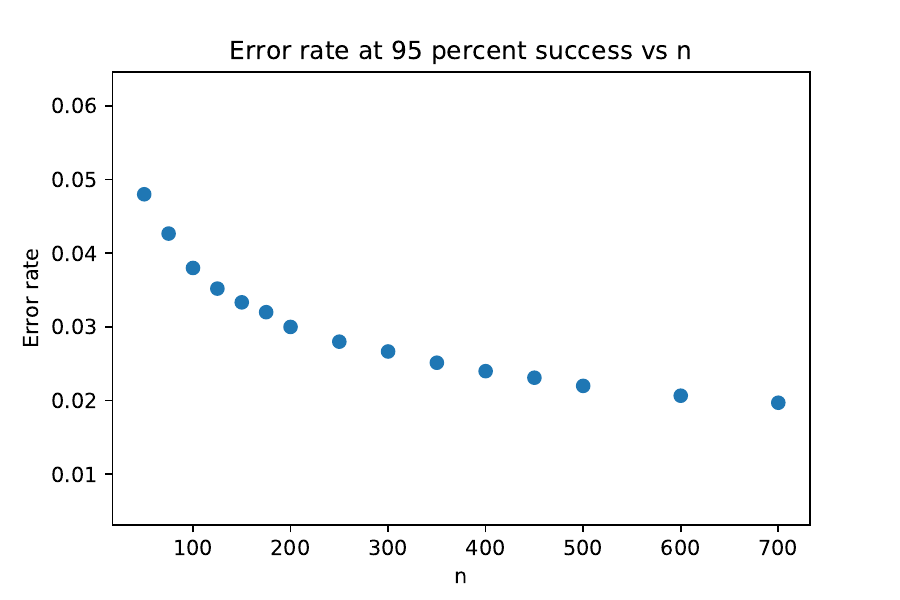}
    \caption{Plot of error correction capacity with $95\%$ probability of success}
    \label{probdecod1}
\end{figure}
\section{A cryptosystem on permutation error-correcting codes}
The McEliece cryptosystem is a very popular cryptosystem using linear codes, first proposed by Robert McEliece~\cite{mceliece1978public} in 1978. Due to its large key sizes, it never gained a lot of popularity at that time. Recently with the rise of quantum computing, it has gained a lot of attention along with its counterpart proposed by Neiderreiter~\cite{niederreiter1986knapsack} as a \emph{post-quantum} alternative to currently used public key cryptosystems. Fundamentally, the McEliece cryptosystem uses two linear codes, one kept private and another made public with the two codes being equivalent in the following way:
\begin{defn}
    Consider two $[n,k]$ linear codes $C_1$ and $C_2$ generated by the generator matrices $G_1$ and $G_2$. These codes are considered equivalent if, there exists a non singular $k \times k$ matrix $S$ and an $n \times n$ permutation matrix $P$ such that $G_2=S G_1 P$
\end{defn}
Consider a linear code $C$ generated by a matrix $G$. Pre-multiplying $G$ by a non-singular matrix will not change the code. Post-multiplying $G$ by a permutation matrix creates a new code, one whose codewords are obtained by permuting the positions of the codewords in $C$ by the permutation corresponding to the permutation matrix. In other words, there is a \emph{distance preserving map}, or isometry between two equivalent linear codes. The principle behind the McEliece cryptosystem is that we have a map between an easy instance and a hard instance of the closest vector problem in linear codes using this distance preserving map. The private key matrix, is chosen from a family of linear codes with a fast decoding algorithm. However, the public key matrix generates a linear code do not have a fast decoding algorithm. The map between these easy and hard instance is known only to the owner of the private key. This provides the one-way property which is essential to all public-key cryptosystems.

In the previous section, we studied permutation groups as error-correcting codes. In this section, we investigate the following questions:
\begin{itemize}
    \item Can we develop a public key cryptosystem similar to the McEliece cryptosystem using permutation codes instead of linear codes?
    \item Is there any potential advantage of using permutation codes instead of linear codes from the standpoint of cryptography?
\end{itemize}
The first step in this direction would be to investigate isometries of the symmetric group with the Hamming metric. Assume that $\phi$ is an isometry of the Hamming space. We can have the private key as a permutation code $H$ (which is a subgroup of $S_n$) with an efficient decoding algorithm and the public key as the permutation code $\widehat{H}=\phi(H)$ for which the decoding algorithm will not work. The isometries in the symmetric group were studied and classified by Farahat~\cite{isometry} who proved that they are of one of the following types: 
\begin{itemize}
    \item $x \mapsto g x h$ for $g$ and $h$ being fixed members of $S_n$
    \item $x \mapsto g x^{-1} h$ for $g$ and $h$ being fixed members of $S_n$
\end{itemize}
However the isometries that interest us should be homomorphisms. This is due to practical reasons. We want both the codes $H$ and $\widehat{H}$ to be subgroups of $S_n$. This is because a permutation group can be represented efficiently using a generator set. A permutation array without a group structure cannot be represented efficiently this way. Hence, we are interested in isometries of $S_n$ that are homomorphisms. We now combine these ideas to create a public-key cryptosystem. Our ingredients are a permutation code with a fast decoding algorithm and a conjugation map. 
    \subsection{A McEliece cryptosystem using permutation codes}
	In this section, we present a public-key cryptosystem using permutation codes that is similar to the McEliece cryptosystem. The private key is kept secret, the public key is available to the public.
	\begin{description}
	\item[Private key] Let $G$ be a permutation group acting on $n$ letters with a fast decoding algorithm. We choose a subgroup $H\leq G\leq S_n$ and a permutation $g$ from $S_n$. The private key is a set of  generators $\{ h_1,\ldots,h_s \}$ of $H$ and $g$. We will use the decoding algorithm in $G$ as a decoding algorithm for $H$. 
	\item[Public key] The public key is a set of elements $\{ \hat{h}_i \}=\{ g^{-1}h_ig \}$ which generate the conjugate group $\widehat{H}=g^{-1}Hg$. Note that a security assumption is that $\widehat{H}$ does not have a fast decoding algorithm. Note that $\widehat{H}$ might not be a subgroup of $G$. 
	\item[Encryption] The plaintext is $\hat{h}$ of $\widehat{H}$. We introduce $r$ errors in $\hat{h}$ to create the ciphertext $c$. 
	\item[Decryption] Once we receive $c$, we compute $gcg^{-1}$. We then use the fast decoding algorithm to obtain $h$, which is its nearest neighbour in $H$. We then compute $\hat{h}=g^{-1}hg$. 
	\end{description}
	The conjugation map is an isometry of the Hamming space. That is why this cryptosystem works. The distance between $h$ and $H$ is equal to the distance between $\hat{h}$ and $\widehat{H}$. In the McEliece cryptosystem, the Hamming isometry is post-multiplication by a permutation matrix. The secret scrambler scrambles the basis of a linear code to create a new basis. It does not change the linear code. Here, the analogue to that would be the fact that we choose a random set of generators for the conjugate group as the public key.
	
\paragraph{Remark:}There is a possibility of using isometries other than the conjugation map. However, the errors have to be introduced more carefully. For example, suppose we have a group $G \leq S_n$ and a map from $G$ to $G$ which is an isometry on $G$. Our permutation code is a subgroup $H \leq G \leq S_n$. Hence, if we introduce errors in $H$ so that the ciphertext $c$ lies in the group $G$, the cryptosystem using this isometry will work. We have not studied this problem carefully and leave it for future work. 

	\subsection{The security of our cryptosystem}
	\subsubsection{Security assumptions}
	There are two security assumptions:
	\begin{description}
		\item[AS1] Solving the general decoding problem in $\widehat{H}$ is hard. This means that $\widehat{H}$ has no good decoding algorithm. 
		\item[AS2] Inability to construct a subgroup $H^\prime$ of $S_n$ with generators $\{h^\prime_1,\ldots,h^\prime_s\}$ and $g \in S_n$, such that $g^{-1}H^\prime g=\widehat{H}$ and there is an efficient decoding algorithm in $H^\prime$ that can correct up to $r$ errors. 
	\end{description}
	We now discuss possible attacks on our cryptosystem and its security in the face of these attacks. Like the McEliece cryptosystem, attacks on this cryptosystem can broadly be classified into two types -- unstructured and structural attacks. Unstructured attacks attempt to solve the nearest neighbour problem in $\widehat{H}$ directly without attempting to obtain the private key. A structural attack would attempt to obtain the private key from the available public information. 
	
	The group $H$ has an efficient decoding algorithm. For the cryptosystem to be secure, the group $\widehat{H}$ must not have an efficient decoding algorithm. Specifically, if the algorithm used to decode in $H$ can be modified to work for the conjugate group $\widehat{H}$, the cryptosystem will not be secure. This would be another type of attack which would be specific to the cryptosystem using a particular permutation code. Hence, the security of the cryptosystem depends on the structure of the permutation code $H$ used and its decoding algorithm, and not just on parameters like size, correction capacity, etc. This is especially true in the context of structural attacks. Just like the case with the McEliece cryptosystem, we cannot provide any theoretical basis to the claim that the cryptosystem using a certain permutation code is secure. Rather, we construct a cryptosystem and then attempt to come up with attacks to demonstrate its security. Due to the similarity of our cryptosystems with linear code based cryptosystems, which are very well studied and believed to be secure, one good place to start would be to try and extend the well known attacks on the McEliece cryptosystem to our cryptosystem. One of the attacks we consider in this section is the well studied and powerful information set decoding (ISD) attacks on the McEliece cryptosystem. For the case of permutation codes, we have designed an algorithm which is very similar in spirit to the ISD attack and we study its effectiveness.
	
	We now present some generic attacks on our cryptosystem. These attacks work regardless of the choice of the permutation code. Hence, their effectiveness only depend on public parameters like the correction capacity and size of the code. Later, we will demonstrate the existence of a code with parameters that make the cryptosystem secure against these attacks (Section~\ref{wreath}).  
	\subsubsection{Brute Force Attacks}
	One obvious method to solve the nearest neighbour problem in $\widehat{H}$ is to enumerate all elements of $\widehat{H}$ and calculate their hamming distance from the ciphertext. This means that $\widehat{H}$ must be large. Its exact size requirement would depend on the level of security we are looking for. 
	
	Another way is to enumerate all possible permutations within a distance $r$ from the codeword $c$ and check if they belong to $\widehat{H}$. The complexity of this attack would be $\binom{n}{r}(n-1)^r$, where $n$ and $r$ are the degree and correction capacity of the group respectively. 
	\subsubsection{Information Set Decoding}
	One of the more successful attacks on the McEliece cryptosystem uses the technique of information set decoding. The original idea was proposed by Prange~\cite{prange} in 1962 and over the years many improvements on this basic attack have been proposed~\cite{info1,bernstein2008attacking}. For an $[n,k]$ linear code, it involves picking $k$ coordinates at random and trying to reconstruct the codeword from those positions. We now recreate a version of ISD for permutation codes:
	
	\begin{algorithm}[H]
	\KwIn{generators of $\widehat{H}$, a ciphertext $c$}
	\SetKwFunction{FSum}{ISD}
    \SetKwProg{Fn}{Procedure}{:}{}
    \Fn{\FSum{}}
    {
	Choose a base $B$ and SGS for $\widehat{H}$ arbitrarily\\
	\If{There are no repeated symbols in positions indexed by $B$}{
	Use element reconstruction algorithm to try and find an element $h \in \widehat{H}$ agreeing with $c$ on these positions.\\
	\If{$h$ exists \textbf{\textrm{and}} $d_H(h,c) \leq r$}
	{\Return $h$
	 }
	}
	\Return \textbf{False}
	}
	Procedure repeated till it succeeds
	\caption{Information Set Decoding}
	\end{algorithm}
	This algorithm is similar to the uncovering-by-bases decoding algorithm proposed by Bailey, except that, we do not know the UBB and hence, the running time is not bounded by the size of the UBB. The correctness of this algorithm follows from the proof in Bailey~\cite[Proposition 7]{bailey1} in which he shows that given any $r$-subset of $\{1,\ldots,n\}$, there exists a base for $H$ disjoint from it. Each iteration is a success if none of the positions of $c$ indexed by the chosen base have an error. This occurs with a finite non-zero probability because such a base exists. The performance of the ISD algorithm is identical for the groups $H$ and $\widehat{H}$. This is because $B$ is a base for $H$ if and only if $g^{-1}B$ is a base for $\widehat{H}$, and the algorithm picks a base at random where $g$ is the secret conjugator.
	
	\paragraph{Complexity} The algorithm uses a procedure to choose an arbitrary base $B$ for the group. At the same time, a set $R$ of size $r$ is chosen uniformly at random and errors are induced in those positions. Before making concrete complexity estimates, we would like to make some observations. First, it is clear that if there are more error positions, the probability that the information set does not have any errors is lower and hence the complexity of the attack increases with $r$. Also, if the size of the base chosen is small, the probability that the positions indexed by this base does not have any error positions is high. Hence the complexity of the attack increases with $k$, the expectation of the length of the base.
	
    Let $E_B$ denote the event that the algorithm chooses subset $B$ as a base for the group. Conditioned on the event $E_B$, an iteration of the algorithm succeeds if the sets $R$ and $B$ are disjoint. Because the set $R$ is chosen uniformly at random from $\{1,\dots,n\}$,
	\begin{equation*}
	    Pr[\text{Success}|E_B] \geq \frac{\binom{n-|B|}{r}}{\binom{n}{r}}.
	\end{equation*}
	As this probability depends only on the size of $B$ and not on the set $B$, we can obtain an expression for the probability of an iteration of the algorithm succeeding:
	\begin{equation*}
	     Pr[\text{Success}]\geq\sum_k  Pr[\text{Success}||B|=k]Pr[|B|=k]=\sum_k \frac{\binom{n-k}{r}}{\binom{n}{r}}Pr[|B|=k].
	\end{equation*}
	Let $k_{\text{max}}$ be the size of the irredundant base of largest size of $H$. As the $Pr[\text{Success}|E_B]$ decreases as the size of the base increases, we can obtain the following lower bound on the probability of the algorithm succeeding:
	\begin{equation*}
	    Pr[\text{Success}]\geq \frac{\binom{n-k_{\text{max}}}{r}}{\binom{n}{r}}
	\end{equation*}
	This is a Monte Carlo algorithm, which decodes correctly with probability $Pr[\text{Success}]$, and otherwise returns false. If we repeat the algorithm till it succeeds, we obtain a Las Vegas algorithm that always decodes correctly but has a running time which is a random variable with expectation $1/Pr[\text{Success}]$. It is also important to note that as each iteration is independent, this attack can be effectively parallelized. Using the following well known formula for binomials, 
	\begin{equation*}
	    \binom{n}{r}=\Theta\left(2^{H_2(\frac{r}{n})n}\right) 
	\end{equation*}
	we can show that this algorithm has expected complexity: 
	\begin{equation} \label{seceqn}
	    \Theta\left( 2^{\alpha(k_{\text{max}},r)n } \right)
	\end{equation}
	where $H_2(p)=-p \log_2 p -(1-p) \log_2(1-p)$ is the binary entropy function and $\alpha(k,r)=H_2\left(\frac{r}{n}\right)-\left(1-\frac{k}{n}\right)H_2\left(\frac{r}{n-k}\right)$. The running time increases with both error rate and information rate. Gill and Lod\`{a}~\cite{gill} computes $k_{\text{max}}$ for the symmetric group.
	
	The security level of a cryptosystem is measured using the amount of computational resources needed to break it. It is often expressed in terms of `bits', where a cryptosystem that is $b$-bit secure requires $2^b$ operations to be broken. For asymmetric cryptosystems, this security level is obtained using the running time of the best known attacks on them. Using the ISD, attack,  we can now obtain the following necessary condition for our cryptosystem to have $b$-bit security:
	\begin{equation}
		\alpha(k_{\text{max}},r)n \geq b
	\end{equation}
	Hence, if we want a cryptosystem with a $b$-bit security requirement, we need to choose a permutation code with the appropriate parameters.  
	
	So far, we have described a framework to develop cryptosystems using permutation codes and outlined generic attacks on those cryptosystems, whose effectiveness depend on the parameters of the chosen permutation code $H$. Next, we attempt to use the permutation codes described in Section~\ref{examples} in our cryptosystems. 
	
	\subsection{Our cryptosystem using wreath product groups}
	\label{wreath}
	The security and the performance of our cryptosystem depends on the choice of the group $H$ and its decoding algorithm. In this section, we explore using the wreath product groups with the alternate decoding algorithm discussed in Section ~\ref{altdecoding} for this purpose. We show that, using the appropriate parameters, it can be made resistant to information set decoding attacks. The decoding algorithm can be modified to work in the conjugate group $\widehat{H}$ as well, which would make the cryptosystem using it insecure. Although this cryptosystem is insecure, it does provide a concrete example of a permutation code with an efficient decoding algorithm which cannot be efficiently decoded using the ISD algorithm. 
	
	Let $H$ be the wreath product group $C_m \wr S_n$ (more generally, we can take $H$ to be a subgroup of $C_m \wr S_n$ with suboptimal decoding). We consider the case of $m=5$ for different values of $n$, for different probabilities of correct decoding. Because the actual minimal degree of this group is $5$, there will be cases in which the receiver of the encrypted word cannot decode correctly and will receive a wrong word. In these cases, the receiver will be able to tell that the decoding is wrong (possibly by using an appended hash function) and ask the sender the resend the message. Consider the case for which the probability of correct decoding is $0.9$ and $0.95$. We find the largest such value of $r$ for which $r$ errors are introduced and can be corrected for at least $0.9$ or $0.95$ fraction of the cases for different values of $n$, and estimate the amount of computational resources required for an ISD based attack to break the cryptosystem by plugging in this value of $r$ into Equation~\ref{seceqn}.
	\begin{figure}
	    \centering
	    \includegraphics[scale=0.5]{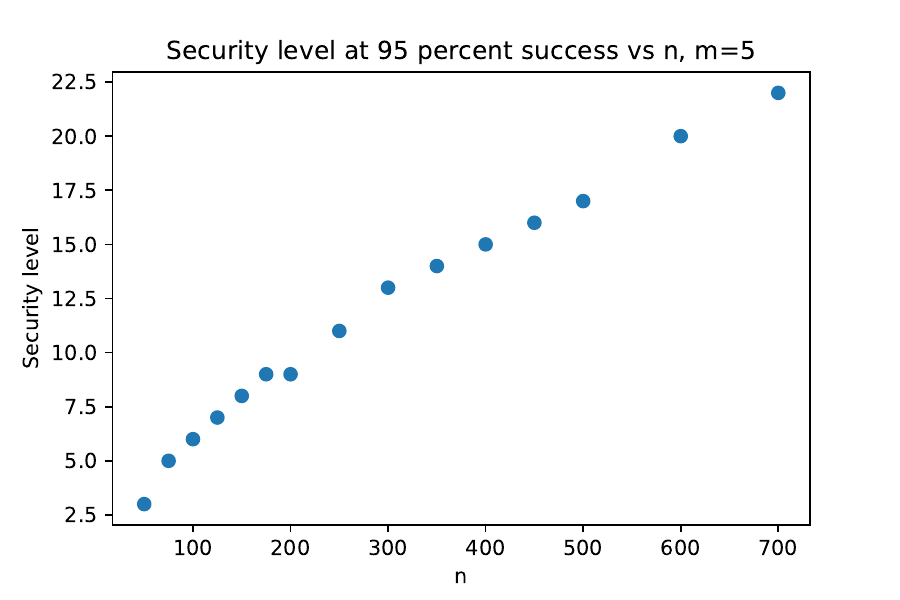}
	    \includegraphics[scale=0.5]{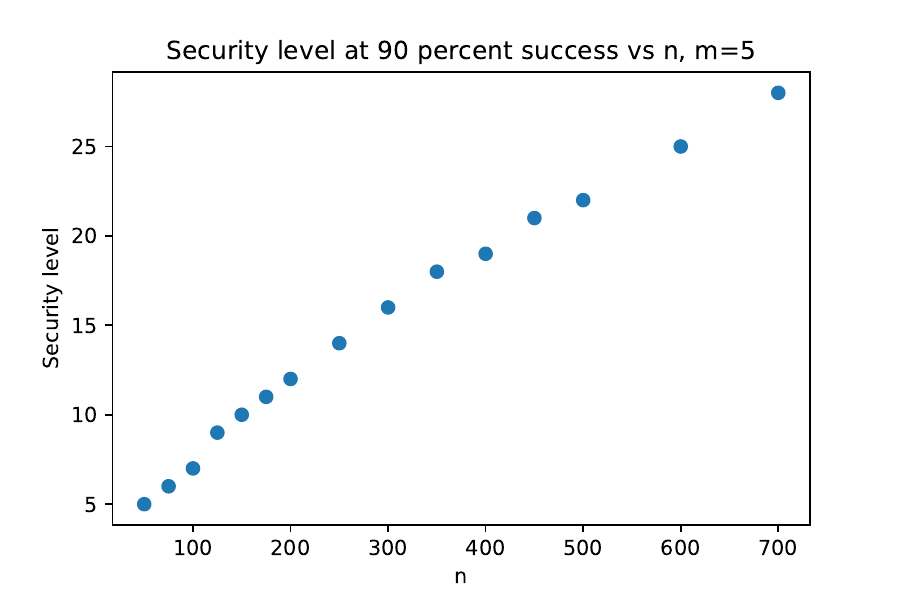}
	    \caption{Security levels for $m=5$}
	    \label{fig:my_label1}
	\end{figure}
	As we can see, the computational resources needed to break this cryptosystem is non trivial and increases with $n$, although it is still not close to commercial levels of security. 
	We also repeat this exercise for the case of $m=7$.
	\begin{figure}
	    \centering
	    \includegraphics[scale=0.5]{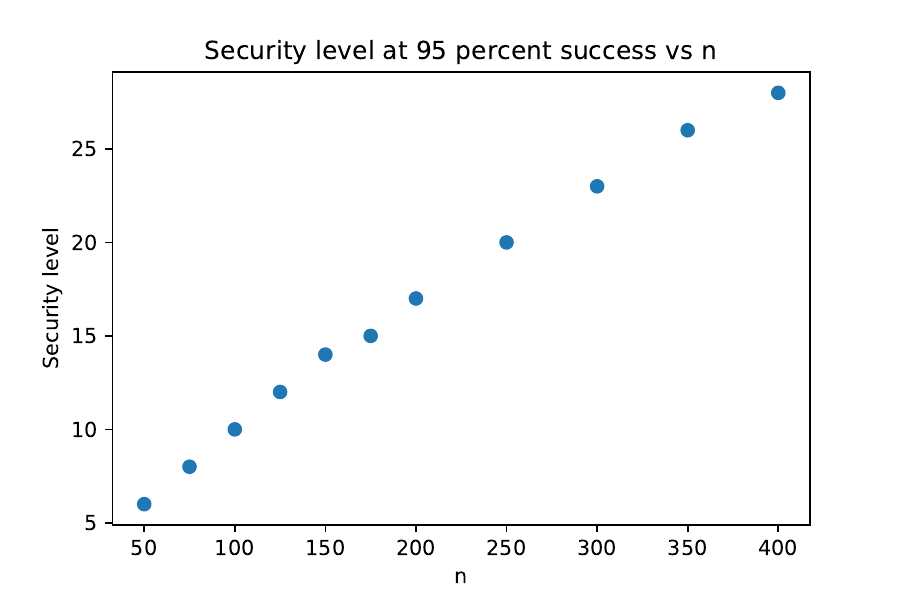}
	    \includegraphics[scale=0.5]{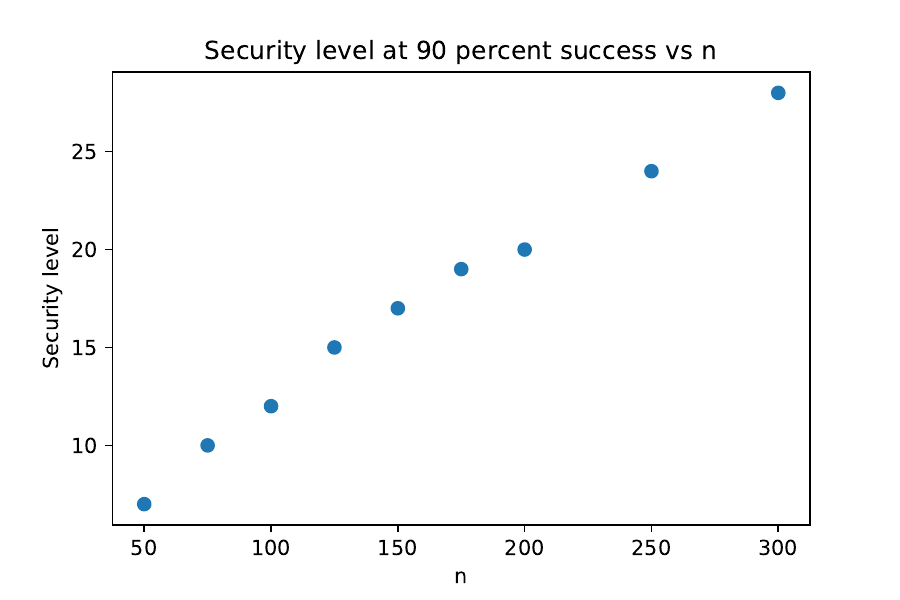}
	    \caption{Security levels for $m=7$}
	    \label{fig:my_label2}
	\end{figure}
	For $m=7$, we can attain higher security levels for smaller values of $n$ compared with $m=5$. These results mean that, by increasing $m$ and $n$ to an appropriate amount, we can reach commercial levels of security such as $128$ or $256$ bits. Our analysis of larger values of $m$ and $n$ were limited by our computational resources, as our computer program for calculating the proportion of error patterns that can be corrected involves iterating over integer partitions, which is computationally intensive. Judging by the trend in the graphs however, we can conclude that the values of $m$ and $n$ can be appropriately increased to attain larger levels of security.
	\subsubsection{Attack using block systems}
	For a permutation group $G$ acting on $\Omega$, a complete block system $\mathcal{B}$ is a partition of $\Omega$ into disjoint sets $B_1, \dots, B_k$ called blocks, such that if two points $x$ and $y$ are in the same block, then for all permutations $g \in G$, $x \cdot g$ and $y \cdot g$ are also in the same block. It can be shown that every block in $\mathcal{B}$ have the same cardinality if $G$ is transitive. 
	
	The first property of block systems that we shall prove is that conjugation preserves the block structure of a permutation group. 
	\begin{theorem}
	Consider a permutation group $G$ with a block system $\mathcal{B}=\{B_1, \dots, B_k\}$. The conjugate permutation group $G'=\sigma ^{-1}G\sigma$ has the block system $\mathcal{B}'=\{B_1 \cdot \sigma, \dots, B_k \cdot \sigma\}$.
	\end{theorem}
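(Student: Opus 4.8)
The plan is to verify directly from the definition of a block system that $\mathcal{B}'$ satisfies the two required properties: that it is a partition of $\Omega$, and that it is preserved by the action of every element of $G'$. No deep idea is needed; the whole content is an unwinding of the definitions of block system and of conjugation, done carefully with the right action.

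First I would observe that since $\sigma \in S_n$ is a bijection of $\Omega$, applying $\sigma$ to each block carries the partition $\mathcal{B}$ to a family $\mathcal{B}' = \{B_1 \cdot \sigma, \dots, B_k \cdot \sigma\}$ that is again a partition of $\Omega$: the sets $B_i \cdot \sigma$ are pairwise disjoint because the $B_i$ are and $\sigma$ is injective, they are nonempty because the $B_i$ are, and their union is $\Omega \cdot \sigma = \Omega$. So $\mathcal{B}'$ is a genuine partition into $k$ nonempty parts.

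Next I would check the block condition. Suppose $x$ and $y$ lie in the same part $B_i \cdot \sigma$ of $\mathcal{B}'$, and let $g'$ be an arbitrary element of $G' = \sigma^{-1} G \sigma$, say $g' = \sigma^{-1} g \sigma$ with $g \in G$. Writing $x = a \cdot \sigma$ and $y = b \cdot \sigma$ with $a, b \in B_i$, a direct computation gives $x \cdot g' = a \cdot \sigma \sigma^{-1} g \sigma = (a \cdot g) \cdot \sigma$ and likewise $y \cdot g' = (b \cdot g) \cdot \sigma$. Since $a$ and $b$ lie in the common block $B_i$ of $\mathcal{B}$ and $\mathcal{B}$ is a block system for $G$, the points $a \cdot g$ and $b \cdot g$ lie in a common block $B_j \in \mathcal{B}$; hence $x \cdot g'$ and $y \cdot g'$ both lie in $B_j \cdot \sigma \in \mathcal{B}'$. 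This is exactly the condition that $\mathcal{B}'$ be a block system for $G'$.

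There is essentially no hard step here, so I would not call anything a genuine obstacle; the one point requiring mild care is that a general element of $G'$ must be written in the form $\sigma^{-1} g \sigma$, so that the conjugating $\sigma$ cancels cleanly against the $\sigma$ used to define the parts of $\mathcal{B}'$, turning the action of $g'$ on $\mathcal{B}'$ into the action of $g$ on $\mathcal{B}$ transported by $\sigma$. If desired, one can record the stronger statement that the map $B_i \mapsto B_i \cdot \sigma$ is a bijection $\mathcal{B} \to \mathcal{B}'$ intertwining the $G$-action on $\mathcal{B}$ with the $G'$-action on $\mathcal{B}'$, which is immediate from the same computation and makes precise the idea that the two block systems are "the same up to relabelling by $\sigma$''.
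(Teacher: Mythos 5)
Your proposal is correct and follows essentially the same route as the paper's proof: write a general element of $G'$ as $\sigma^{-1} g \sigma$, pull points of $B_i \cdot \sigma$ back through $\sigma$ into $B_i$, and transport the block condition for $G$ forward by $\sigma$. The only addition is your explicit check that $\mathcal{B}'$ is again a partition, which the paper leaves implicit.
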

	\begin{proof}
	Every permutation in $G'$ is of the form $\sigma^{-1} g \sigma$ where $g$ is a member of $G$. Consider any two points $x$ and $y$ in the set $B_i \cdot \sigma$. Then
$	    x \cdot \sigma^{-1} g \sigma=x' \cdot g \sigma \text{ and }y \cdot \sigma^{-1} g \sigma=y' \cdot g \sigma.$
	As $x'$ and $y'$ are in the same block $B_i$, $x' \cdot g$ and $y' \cdot g$ are in the same block of $\mathcal{B}$, which means that $x \cdot \sigma^{-1} g \sigma$ and $y \cdot \sigma^{-1} g \sigma$ are in the same clock of $\mathcal{B}'$ 
	\end{proof}
	For the wreath product group $C_m \wr S_n$ acting on a set of $mn$ points it is easy to see that the columns form a maximal block system, with the group $C_m$ acting on the blocks and the group $S_n$ permuting them. Hence, the public key group $\sigma^{-1}G\sigma$ has a corresponding block structure. First, we observe that the alternative decoding algorithm to be modified for a more generic type of group, with a block system such that the group acts regularly on each block.   
	
	Consider any permutation group acting on a set $\Omega$ with a block system consisting of $n$ blocks of size $m$ each. It is well known that this permutation group can be described as a subgroup of the wreath product group $S_m \wr S_n$. Every permutation is described by how it acts on each block and by how it permutes the blocks among themselves. In this case the permutation group also acts regularly on each block.
	
	Let the conjugate of $G$ be $G^\prime$. We compute a maximal block system for $G^\prime$ (or, if we use a subgroup of $G$, we can look for a block system with blocks with the required size). We then compute a set of generators for a subgroup of $G^\prime$ which stabilizes this block system. This subgroup will act regularly on each block. We call these regular subgroups $H_1,H_2,\ldots,H_n$. An element of a regular subgroup is uniquely determined just by its action on one point. This element can also be computed using the Schreier-Sims algorithm.
	
	Hence, for each position in $\Omega$, we compute the block labels and relabel the received word accordingly. For each position, we then compute the element of the corresponding regular group and the value of the block permutation. We then compute the most frequently occurring value of these elements in each block to obtain a group element of the form $(h_1, h_2, \ldots, h_n, \sigma)$ where $h_i \in H_i$ and $\sigma \in S_n$ and convert it back to permutation form. Hence, the same decoding algorithm can be deployed on the public key group, breaking the cryptosystem.
	\subsection{Our cryptosystem using the symmetric group acting on $2$-subsets}
	We now examine the use of a group whose decoding algorithm will not work on its conjugates. Because of its parameters  however, it can be trivially broken using ISD attacks. 
	
	Let $G$ be the symmetric group $S_m$ acting on the $\Omega$, defined as, the set of  all two subsets of $\{1,\ldots,m\}$. Let $n=\frac{m(m-1)}{2}$ and we choose $H$ to be a subgroup of $G$. However, $H$ is also a subgroup of $S_n$. For that there is a  relabeling of the two subsets of $\{1,2,\ldots,m\}$ by elements of $\{1,2,\ldots,n\}$. This labeling is not publicly available. Thus publicly we see $H\leq G\leq S_n$.
	
	We use the decoding algorithm using the UBB described in Section~\ref{2sets}. As $H$ is a subgroup of $G$, the same decoding algorithm can be used, although the decoding might be suboptimal because the minimal degree of $H$ might be greater than $G$. Unlike the case of the alternative decoding algorithm for the wreath product groups we cannot easily construct a UBB for the conjugate group $\widehat{H}\leq S_n$ using the same techniques used to construct one for $G$. To recall, the UBB is constructed using the Hamiltonian cycles of a complete graph. Now the conjugation map is just relabeling. When this relabelling happens, the V-graphs changes arbitrarily making them not bases. There seems to be no way of constructing a UBB without finding the conjugator. Furthermore, note that, since $H$ is a subgroup of $G$, there is insufficient information on the action of the $G$ on the complete graph from which the UBB is constructed.
	
	The next thing we attempt to do is attack the second security assumption. That is, we attempt to find a conjugator $g$ and a subgroup $H^\prime$ of $S_n$ with an efficient decoding algorithm. We assume that these subgroups are all subgroups of $G$, as we can employ the UBB algorithm for them. We do manage to come up with an attack which is better than a brute force search for $g$.
	\subsubsection{Conjugator Search Attack} 
	To recall, we are given a sequence of generators $\{\hat{h}_i\}$ for the group $\widehat{H}$. We attempt to construct a set of generators for a subgroup $H^\prime$ of $G$ and $g\in S_n$ such that $g\widehat{H} g^{-1}=H^\prime$. We provide a sketch of the attack:
	
	Consider any element of $S_m$. Its cycle type in the permutation image acting on $2$-subsets is entirely determined by its cycle type acting naturally. It is possible for two different cycle types acting naturally to lead to the same cycle type acting on $2$-sets. Two elements which are conjugate in $S_m$ are also conjugate in $S_n$, but two elements can be conjugate in $S_n$ but not in $S_m$.
	
	Consider the elements $h_1$ and $\hat{h}_1$. The crucial thing to note is that it is enough to find the cycle type of the pullback $\hat{h}_1$ in $S_m$. This is because any element of the same cycle type of $h_1$ is conjugate to $h_1$ in $S_m$, and we can conjugate all the other generators also by the same element to obtain a different subgroup $H^\prime$ of $S_m$ which would also have an efficient decoding algorithm.
	
	Let us say we find an element of $S_m$ with the same cycle type as $h_1$ (cycle type in $S_n$). The conjugator is unique up to a coset of the centralizer of $\hat{h}_1$ in $S_n$. Among all the elements in this coset, we need to find a conjugator that takes each $\hat{h}_i$ into $S_m$. The only way to do that (to our knowledge) would be by an exhaustive search. The bottleneck in the algorithm is this step. Finding $g \in S_n$ that conjugates an element $a \in S_n$ to an element $b \in S_n$ is not considered a difficult problem in practice and neither is computing the centralizer of an element as there are very effective backtracking methods to solve these problems~\cite[Section 9.1.2]{seres}. Thus taking $g\in S_n$ does not help with the security of the cryptosystem.
\subsubsection{ISD attacks}
	The size of a base for $H$ is less than $\log_2(m!) \leq m \log_2 m$. Hence, the number of bit operations required to break the cryptosystem using information set decoding is less than: 
	\begin{equation*}
	    \left( H_2\left( \frac{m-3}{m(m-1)/2}\right)-\left(1-\frac{m \log_2 m}{m(m-1)/2} \right) H_2\left( \frac{m-3}{m(m-1)/2-m \log_2 m} \right)\right)\frac{m(m-1)}{2}
	\end{equation*}
	We plotted this as a function of $m$ (Fig.~\ref{security}) and we can see that the security level of even $80$ bits isn't attainable for reasonable values of $m$.
	
	The key reason why the cryptosystem using this code is insecure is the \emph{parameters} of the code. Both the rate of the code and the correction capacity are too low. What we are aiming for is a code where the quantities $k/n$ (on average) and $r/n$ are asymptotically constant, like the binary Goppa codes. This requirement is satisfied by the wreath product groups using the alternative probabilistic decoding algorithm. 
	\begin{figure}[ht] 
	    \centering
		\includegraphics[scale=0.5]{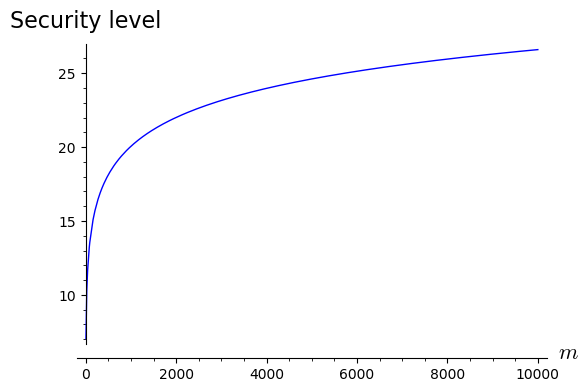}
		\caption{Security of cryptosystem using $2$-sets}
		\label{security}
	\end{figure}
	\subsection{An enhanced McEliece cryptosystem}\label{enhanced}
	Permutation codes are in permutation groups which have less structure than linear codes which are vector spaces. Linear codes can also be seen as permutation codes.
Let $q=p^t$ where $p$ is a prime and $t$ a positive integer. A linear code of block size $n$ over the field $\mathbb{F}_q$ is an additive subgroup of $\mathbb{F}_q^n$ and a permutation code of block size $n$ is a subgroup of $S_n$. We now \emph{enhance the classical McEliece cryptosystem over linear codes} using permutation codes. For this we describe an embedding. 
  
	We start with a monomorphism from the abelian group of $\mathbb{F}_q$ to the symmetric group $S_{tp}$ where $q=p^t$ and $p$ is a prime. The representation for $\mathbb{F}_q$ is the polynomial representation. This means that $\mathbb{F}_q\equiv \mathbb{F}_p[x]/\phi(x)$ where $\phi(x)$ is a irreducible polynomial of degree $t$. This says that every element of $\mathbb{F}_q$ is a polynomial $\alpha_0+\alpha_1 x + \alpha_2 x^2 +\cdots+\alpha_{t-1}x^{t-1}$. Now we fix $t$ disjoint cycles $c_0,c_1,\ldots,c_{t-1}$ of length $p$ in $S_{pt}$ and order them. One can choose $c_0=(1,2,\ldots,p)$ as the first cycle, $c_1=(p+1,\ldots,2p)$ as the second and so on. Note that disjoint cycles commute. Now we define a map, which we call the \textbf{basis map}, from the additive group of 
	$\mathbb{F}_q$ to $S_{pt}$:
	\begin{align}
	0\mapsto (\,),\; \alpha_0\mapsto c_0^{\alpha_0}\;\text{and}\;
	\alpha_ix^i\mapsto c_i^{\alpha_i}\; \text{for}\; i =\{1,2,\ldots,t-1\}.\nonumber
\end{align}	 	
Thus $\alpha_0+\alpha_1 x + \alpha_2 x^2 +\cdots+\alpha_{t-1}x^{t-1}\mapsto 
c_0^{\alpha_0}c_1^{\alpha_1}\cdots c_{t-1}^{\alpha_{t-1}}$. It is easy to check that this map is an embedding -- a monomorphism. Note that the map depends on the choice of the cycles.
Now if we have a vector space $V$ of dimension $n$ over the field $\mathbb{F}_q$, corresponding to a fixed ordered basis $\mathcal{B}$ of $V$, every element of $V$ is a vector of size $n$ over $\mathbb{F}_q$ -- the co-ordinate vector. These co-ordinate vectors depends on the fixed ordered basis of the vector space. Then the embedding of this vector space in the symmetric group $S_{pt}^n$ is done in the obvious way. A vector in $V=\langle v_1,\ldots,v_{n}\rangle$, where the basis is $\mathcal{B}=(v_1,\ldots,v_n)$, is of the form $\nu_1v_1+\ldots+\nu_{n}v_{n}$ where $\nu_i$ are field elements -- the coordinates. Then $v$ corresponds to the ordered tuple $(\nu_1,\ldots,\nu_{n})$ of field elements. This ordered tuple is then mapped to an ordered tuple $(\hat{\nu}_1,\ldots,\hat{\nu}_n)$ in $S_{pt}^n$ a direct product of $n$ copies of $S_{pt}$ in the obvious way, using the basis map already defined. 
This embedding respects addition in the field and then in the vector space. It also respects scalar multiplication as long as the scalar is in the prime subfield.
Thus a linear code $S$ being a subspace of the vector space $V$ can be embedded as an elementary abelian subgroup of the symmetric group -- a permutation group. 
	
Let $S=\langle s_1,s_2,\ldots, s_d\rangle$ be a $d$-dimensional subspace of a $n$-dimensional vector space $V$ which is a code with a good decoding algorithm. This code has a $d \times n$ generator matrix. This basis matrix depends on a fixed ordered basis $\mathcal{B}$ of $V$. One can define a classical McEliece cryptosystem based on this code $S$. As in a McEliece cryptosystem we define the public key as $S^\prime = ASP$ where $A$ and $P$ are the scrambler and permutation matrices respectively. Now $S^\prime$ is also a $d\times n$ matrix over $\mathbb{F}_q$. One can use the basis map defined earlier to transfer this matrix to a matrix $S_1$ over $S_{pt}$. Here $S_1$ is constructed by taking the embedding of each element in $S^\prime$. Then each element of $S_1$ is conjugated by $g$ an element of the symmetric group $S_{pt}$ and denoted by $S_1$ as well. Thus $S_1=g^{-1}S_1g$. Recall that this conjugation is just a relabelling of the permutations.

In this enhanced McEliece cryptosystem the public key is the rows of the matrix $S_1$. The whole structure of the classical McEliece cryptosystem $S$ and $S^\prime$ is secret and so is the basis map and the conjugator $g$. The plaintext is $[a_1,\ldots,a_n]$ where each $a_i$ is in the $\mathbb{F}_p$ the prime subfield of $\mathbb{F}_q$. Then one computes $a=\prod\limits_{i=1}^n (S_1[i])^{a_i}$ where $S_1[i]$ is the $i$th row of $S_1$. Then we introduce errors in this permutation $a$. This is the ciphertext.

On receiving the ciphertext, use the conjugator $g$ to restore original labelling. Then one decomposes it as product of cycles. Then one computes most of the $a_i$ from the exponent. Then that gives rise to a vector of length $n$ over $\mathbb{F}_q$ via the embedding. Assume that this is the cyphertext for the classical McEliece cryptosystem. Decrypt it using the classical McEliece cryptosystem. We will get the plaintext $[a_1,\ldots,a_n]$ back.

There are few things to note. First, since the basis map is a secret it enhances the classical McEliece cryptosystem. Thus one might be able to use linear codes for the enhanced McEliece cryptosystem that is otherwise not secure. When it comes to cycles, computing the exponent has a lot of redundancy. Say, in a cycle after introducing errors there is one point that gets repeated. Then we can simply ignore the repeated points and compute the exponent. It is the action on one point that determines the exponent, when it comes to a cycle. Furthermore, a field element now has a permutation representation. Based on this representation we might be able to introduce much more errors than is possible to fix for a permutation error correcting code. This topic, what is the right way to introduce errors and how many errors we can introduce is left unsettled in this paper. Lastly, there is no need for the matrix $S^\prime$. One can just move from $S$ to $S_1$ directly without using $S^\prime$. In this case use a scrambler matrix over the prime subfield. Then the extra step in decoding can be avoided.

Assume now that there is a classical McEliece cryptosystem over linear codes that is secure. Then for a suitable ciphertext of the original McEliece cryptosystem, one can map it to $S_1$ by the basis map. If the original plaintext is recovered from the ciphertext in the permutation setting then that breaks the original McEliece cryptosystem. Note, we assume that the basis map is known in this case. Thus this McEliece cryptosystem is truely an enhanced version of the classical McEliece cryptosystem and is bit more than mere academic interest and deserves further study.
\subsection{Why use permutation codes?}
	So far, we have described a framework for using permutation codes in public key cryptography and explored this framework using two particular permutation codes. The cryptosystem using wreath product groups can be made resistant to ISD attacks but its decoding algorithm can be adapted for use in any of its conjugates too, which makes it unsuitable. On the other hand, for the symmetric group acting on $2$-sets, the decoding algorithm cannot be modified for use in its conjugates, nor can the conjugator be uncovered easily from $H$ and $\widehat{H}$. This is because the decoding algorithm depends on some very specific graph theoretical properties of the complete graph, and the conjugates of $H$ cannot be modeled as the symmetric group acting on a complete graph. The parameters of this code mean that any cryptosystem using this is susceptible to ISD attacks. The question is, can we come up with a permutation code with parameters that make it resistant to ISD attacks and a decoding algorithm that cannot be modified for its conjugates? Our cryptosystem using such a permutation code would be secure against both kinds of attacks demonstrated earlier. 

	Permutation groups differ from vector spaces in their non-commutativity, and it would be an interesting question to see if this would lead to any significant improvements in key size, speed, etc over traditional code based cryptosystems. For example, a rank $k$ subspace of $\mathbb{F}_q^n$ needs $k$ basis vectors to describe, each of length $n$. On the contrary, very large permutation groups can be generated by a very small number of generators. The symmetric group itself for example, can be generated using just two of its elements. This is a characteristic shared by many non-abelian permutation groups. Hence, a cryptosystem using permutation codes can potentially achieve a quadratic reduction in key size compared to its linear code counterparts for the same level of security! A linear code based cryptosystem using a code of length $n$ over $\mathbb{F}_q$ would require a key of $O(n^2)$ as one of the components of the public key is a $k \times n$ matrix. By contrast, consider a permutation group of comparable size which is a subgroup of $S_n$ which can be generated using just two generators. The space needed to store these generators would be just $O(n)$. One of the common complaints against the McEliece cryptosystem is its large key size so this would be a direction of research worth pursuing.
\section*{Acknowledgements} This paper was part of the first author's Master's thesis at IISER Pune who was partially supported by an INSPIRE fellowship. The second author was partially supported by a NBHM research grant.	Both authors thank Upendra Kapshikar for stimulating discussions. We thank both the referees for their thorough reading of the paper and insightful comments.
\begin{small}
\renewcommand{\baselinestretch}{0.5}
\bibliography{thesis}
\end{small}
\Addresses
\end{document}